\newtheorem{assumption}[theorem]{Assumption}
\Crefname{assumption}{Assumption}{Assumptions}
\Crefname{paragraph}{Paragraph}{Paragraphs}
\Crefname{subsection}{Subsection}{Subsections}
\Crefname{subsubsection}{Subsection}{Subsections}
    \newcommand{\op}[1]{\operatorname{#1}}
    \newcommand{\var}[1]{\mathit{#1}}
    \newcommand{\set}[1]{\left\{{#1}\right\}}
        \newcommand{\shortFullyRetroactiveMonoPQ}{FR-MonoPQ}
        \newcommand{\shortPartiallyRetroactivePQ}{PR-PQ}
        \newcommand{\shortFullyRetroactivePQ}{FR-PQ}
\title{Retroactive Monotonic Priority Queues via Range Searching} 
\titlerunning{Retroactive Monotonic Priority Queues} 
\author{Lucas Castro}{Institute of Computing - UFAM, Brazil}{lucas.castro@icomp.ufam.edu.br}{https://orcid.org/0009-0008-0876-823X}{}
\author{Rosiane de Freitas}{Institute of Computing - UFAM, Brazil}{rosiane@icomp.ufam.edu.br}{https://orcid.org/0000-0002-7608-2052}{}
\authorrunning{L. Castro and R. de Freitas} 
\keywords{computational geometry, monotone priority queue, priority search tree, retroactive data structure}
\begin{document}

\maketitle

\begin{abstract}
The best-known fully retroactive priority queue costs $O(\log^2 m\allowbreak \log \log m)$ time per operation and uses $O(m \log m)$ space, where $m$ is the number of operations performed on the data structure. In contrast, standard (non-retroactive) priority queues can cost $O(\log m)$ time per operation and use $O(m)$ space. So far, it remains open whether these bounds can be achieved for fully retroactive priority queues.

In this work, we study a restricted variant of priority queues known as monotonic priority queues. First, we show that finding the minimum in a retroactive monotonic priority queue is a special case of the range-searching problem. Then, we design a fully retroactive monotonic priority queue that costs $O(\log m)$ time per operation and uses $O(m)$ space, achieving the same bounds as a standard priority queue.

\end{abstract}

\section{Introduction}
Standard data structures operate only in the \enquote{present}. That is, they only take into consideration their current state and \enquote{forget} about their past states. For the vast majority of problems, standard data structures are sufficient and the best option. However, standard data structures are not as versatile as they can be:
\begin{itemize}
    \item Their past states cannot be queried.
    \item Once an element is deleted, all the information about the element is lost.
    \item If an operation is mistakenly performed in the past, it is usually not possible to efficiently undo or modify this operation in the present.
\end{itemize}

Retroactive data structures are a type of data structure that can handle these problems. They were introduced by Demaine et al.~\cite{demaine_retroactive_2007} and are capable of modifying their past states and propagating the consequences of this modification until their present state. Furthermore, they have been used to solve problems such as cloning Voronoi diagrams~\cite{dickerson_cloning_2010} and finding the shortest path in a dynamic graph~\cite{sunita_dynamizing_2021}.

A data structure is called partially retroactive if it can modify the past but can only make queries in the present. A data structure is called fully retroactive if it can modify and query the past.

\paragraph*{Priority queues} Let $m$ be the number of operations performed on the data structure. Demaine et al.~\cite{demaine_retroactive_2007} designed a partially retroactive priority queue with $O(\log m)$ time per operation, matching the bounds of a binary heap~\cite{williams_algorithm_1964}. In contrast, the most efficient fully retroactive priority queue so far has $O(\log^2 m \log \log m)$ time per operation~\cite{dehne_polylogarithmic_2015}. It is known that some fully retroactive data structures need to have a non-constant multiplicative slowdown over their partially retroactive versions~\cite{chen_nearly_2018}, conditioned on widely believed conjectures. However, it is unknown whether this is the case for retroactive priority queues.

A monotonic priority queue is a restricted variant of priority queues, where the extracted elements are restricted to form a non-decreasing function over time. One of its applications is in Dijkstra's algorithm, since the distances of the discovered vertices naturally follow this restriction~\cite{costa_exploring_2025}. Given that a monotonic priority queue exhibits a more restricted behavior, it is usually easier to design it efficiently than to design a more general priority queue efficiently. In this work, we study how to design an efficient fully retroactive monotonic priority queue.

\paragraph*{Our contributions}
\begin{itemize}
    \item We show that finding the minimum in a retroactive monotonic priority queue is a special case of the range-searching problem (\Cref{sec:key_ideas}).
    \item We present a fully retroactive monotonic priority queue that costs $O(\log m)$ time per operation and uses $O(m)$ space (\Cref{sec:data_structure}).
\end{itemize}

\section{Preliminaries} \label{sec:preliminaries}
In this section, we introduce the background necessary to understand the results presented in the next sections. In \Cref{ssec:definitions}, we present the definitions used during this work. In \Cref{ssec:characteristics_background}, we present some important characteristics of retroactive priority queues in order to facilitate the understanding of the topic. Finally, in \Cref{ssec:notation_assumptions}, we list the assumptions and notation used in the rest of this work.

\subsection{Definitions} \label{ssec:definitions}

\begin{definition} \label{def:pq}
    A priority queue can be defined as an abstract data type that maintains a collection of elements and supports the following operations:
    \begin{itemize}
        \item $\op{insert}(x)$: Inserts an element $x$.
        \item $\op{get-min}()$: Returns the minimum element.
        \item $\op{extract-min}()$: Extracts the minimum element.
    \end{itemize}
\end{definition}

\begin{definition}
    A monotonic priority queue is a priority queue with the following (equivalent) constraints: \begin{itemize}
        \item The extracted elements form a non-decreasing sequence~\cite{cherkassky_buckets_1999}. That is, after an element is extracted, only elements greater than or equal to it can be extracted.
        \item Elements smaller than the last-extracted element cannot be inserted~\cite{costa_exploring_2025}.
    \end{itemize}
\end{definition}

\begin{definition} \label{def:rmpq}
    By adding the concept of retroactivity to a priority queue, we can define a retroactive priority queue as an abstract data type with the following operations~\cite{demaine_retroactive_2007}:
    \begin{itemize}
        \item $\op{Insert}(\op{insert}(x), t)$: Inserts an $\op{insert}(x)$ operation at time $t$.
        \item $\op{Delete}(\op{insert}(x), t)$: Deletes an $\op{insert}(x)$ operation at time $t$.
        \item $\op{Insert}(\op{extract-min}, t)$: Inserts an $\op{extract-min}()$ operation at time $t$.
        \item $\op{Delete}(\op{extract-min}, t)$: Deletes an $\op{extract-min}()$ operation at time $t$.
        \item $\op{GetMin}(t)$: Returns the minimum element that exists in the priority queue at time $t$.
    \end{itemize}
\end{definition}

    Note that a partially retroactive priority queue (\shortPartiallyRetroactivePQ{}) has the constraint that the $\op{GetMin}$ operation can be performed only at the present time. That is, $\op{GetMin}(t)$ can be executed only if $t = \infty$. A fully retroactive priority queue (\shortFullyRetroactivePQ{}) does not have this constraint.

\begin{definition} \label{def:monotonic_pq}
    We define a fully retroactive monotonic priority queue (\shortFullyRetroactiveMonoPQ{}) in the same way as an \shortFullyRetroactivePQ{} but with the monotonic constraints applied.
\end{definition}

\subsection{Characteristics of Retroactive Priority Queues} \label{ssec:characteristics_background}

\paragraph*{Ray Shooting Analogy}
There is an analogy between retroactive priority queues and the vertical ray shooting problem~\cite{demaine_retroactive_2007} that can help to understand the behavior of a retroactive priority queue. More specifically, it is possible to represent a retroactive priority queue in a plane in the following way (see \Cref{fig:ray} for an illustration of this representation): 
\begin{itemize}
    \item Let $k$ be the value of an element, $t_{ins}$ be its insertion time, and $t_{del}$ be the time it is extracted. Each inserted element is represented by a segment that goes from point $(t_{ins}, k)$ to point $(t_{del}, k)$. If the element is never extracted from the priority queue, $t_{del} = \infty$.
    \item Let $t_{em}$ be the time that an $\op{extract-min}()$ operation occurs. An $\op{extract-min}()$ operation is a vertical ray that starts at $(t_{em}, -\infty)$ and extends upwards. This ray stops at the first line segment that it intersects. (Note that this segment represents the smallest element that exists at time $t$. Therefore, this ray effectively finds the element extracted by the $\op{extract-min}()$ operation.)
    \item A $\op{GetMin}(t)$ operation can be seen as returning the first segment that a vertical ray starting from $(t, -\infty)$ intersects, without modifying the segment as an $\op{extract-min}()$ operation does.
\end{itemize}

\begin{figure}[htpb]
    \centering
    \begin{subfigure}[t]{0.45\textwidth}
        \centering
        \includegraphics[width=\textwidth]{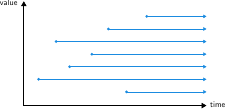}
        \caption{A retroactive priority queue with only insert operations.}
        \label{fig:ray1}
    \end{subfigure}
    \hfill
    \begin{subfigure}[t]{0.45\textwidth}
        \centering
        \includegraphics[width=\textwidth]{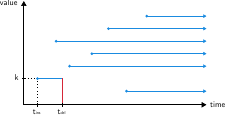}
        \caption{An $\op{extract-min}()$ operation shortens the first intersected segment, forming \enquote{L} shapes.}
        \label{fig:ray2}
    \end{subfigure}

    \bigskip

    \begin{subfigure}[t]{0.45\textwidth}
        \centering
        \includegraphics[width=\textwidth]{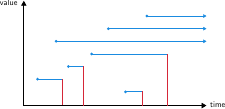}
        \caption{Example with more $\op{extract-min}()$ operations.}
        \label{fig:ray3}
    \end{subfigure}
    \hfill
    \begin{subfigure}[t]{0.45\textwidth}
        \centering
        \includegraphics[width=\textwidth]{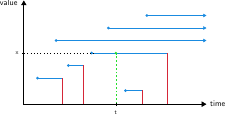}
        \caption{$\op{GetMin}(t)$ operation returning $x$.}
        \label{fig:ray4}
    \end{subfigure}

    \caption{Geometric view of retroactive priority queues. Blue segments represent elements, red segments represent $\op{extract-min}()$ operations, and green dotted segments represent $\op{GetMin}(t)$ queries.}
    \label{fig:ray}
\end{figure}

\paragraph*{Chain Reaction}
Using this representation, it becomes easy to see why a retroactive priority queue can be harder to implement efficiently than a standard one; once a retroactive update is made in the past, the extraction times of $O(m)$ elements can change non-trivially. Thus, in order to propagate these changes, complex approaches are necessary, which can result in poor efficiency for the operations overall. \Cref{fig:chain} shows an example of this. In this example, after the execution of the operation, a subset of elements had their extraction times changed. Therefore, all queries made after this insertion can potentially return a different value than what they would return before.

\begin{figure}[htpb]
    \centering
    \begin{subfigure}[t]{0.45\textwidth}
        \centering
        \includegraphics[width=\textwidth]{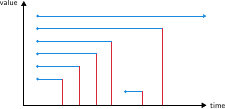}
        \caption{Current representation.}
        \label{fig:chain1}
    \end{subfigure}
    \hfill
    \begin{subfigure}[t]{0.45\textwidth}
        \centering
        \includegraphics[width=\textwidth]{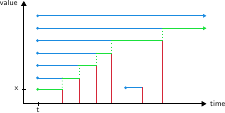}
        \caption{Representation after an $Insert(insert(x), t)$ operation.}
        \label{fig:chain2}
    \end{subfigure}

    \caption{Example of the chain reaction in retroactive priority queues. Lines in green represent the changes that occur because of the operation. Dotted lines represent the state of the segment before the operation. Solid lines represent the state of the segment after the operation.}
    \label{fig:chain}
\end{figure}

Note that in the partially retroactive version, since the past cannot be queried, we do not need to maintain the state of the priority queue at all times. Thus, there is more freedom in how we can propagate the changes caused by the operation. We only need to guarantee that the state of the priority queue is correct in the present. In the fully retroactive version, the ability to query at any point in time makes the process of updating the data structure while maintaining efficient queries more challenging, since it is seemingly necessary to maintain a correct priority queue state at all times.

\paragraph*{Retroactive Monotonic Priority Queues}
We can also use the ray shooting analogy to represent a retroactive monotonic priority queue. We only need to add the monotonic restriction to our previous representation. To do that, we use the second constraint of \Cref{def:monotonic_pq}: Elements smaller than the last-extracted element cannot be inserted.

Let $t_{em}$ be the time that an extraction operation occurs. Let $k$ be the element that was extracted by it. To represent the monotonic restriction, we add a rectangle $(t_{em}, \infty) \times (-\infty, k)$ for each extraction operation. \Cref{fig:mono_ray} shows an illustration of this representation.

In a retroactive monotonic priority queue, it is not possible to insert any new element if its segment starts inside any of these rectangles. This is true because the segment of any element smaller than an already extracted element would start inside them.

Note that this representation is for building intuition only. We will use a different one (in \Cref{ssec:range_search_reduction}) in order to implement retroactive operations efficiently.

\begin{figure}[htpb]
    \centering
    \begin{subfigure}[t]{0.45\textwidth}
        \centering
        \includegraphics[width=\textwidth]{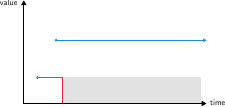}
        \caption{Example with one extraction.}
    \end{subfigure}
    \hfill
    \begin{subfigure}[t]{0.45\textwidth}
        \centering
        \includegraphics[width=\textwidth]{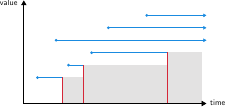}
        \caption{Example with more extractions.}
    \end{subfigure}

    \caption{Illustration of monotonic retroactive priority queues. The gray rectangles represent areas where elements cannot be inserted, because of the monotonic restriction.}
    \label{fig:mono_ray}
\end{figure}

\subsection{Notation and Assumptions} \label{ssec:notation_assumptions}
    The notation used throughout the text is as follows:
    \begin{itemize}
        \item $E$ is the set of elements inserted into the priority queue.
        \item $m$ is the total number of operations performed on the priority queue so far.
        \item We call $val[k]$ the $k$th smallest element inserted into the priority queue. 
        \item We call $em[k]$ the $k$th earliest $\op{extract-min}()$ operation.
        \item $\op{lastExtracted}(t)$ is the last-extracted element before or at time $t$. When there is no extracted value in this interval, $\op{lastExtracted}(t) = -\infty$.
        \item $\op{insertionTime}(x)$ is the time of insertion of an element $x$ into the priority queue.
        \item $\op{extractionTime}(x)$ is the time of extraction of an element $x$ from the priority queue.
    \end{itemize}

We will list the assumptions that we make about the priority queues used in this work to make them explicit and easier to reference during the proofs.

\begin{assumption}[Domain]
    For notational convenience, all elements and times are assumed to be real numbers.
\end{assumption}
However, the presented algorithms work for any totally ordered set that supports constant-time comparison between two elements, without affecting the asymptotic complexities.

\begin{assumption}[Unique Operation Times] \label{def:unique_op_times}
    We assume that each operation on the priority queue occurs at a unique time.
\end{assumption}

\begin{observation} \label{obs:order_em}
    Since all operations occur at different times, we have that $em[i]$ is executed before $em[j]$ for all $i < j$.
\end{observation}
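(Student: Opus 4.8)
The plan is to read the claim off the definition of the notation $em[\cdot]$ once \Cref{def:unique_op_times} is in force. By definition, $em[k]$ is the $k$th earliest $\op{extract-min}()$ operation performed on the structure. First I would note that at any point only finitely many operations have been performed, and each $\op{extract-min}()$ operation has an associated execution time (a real number, under the Domain assumption). By \Cref{def:unique_op_times} these execution times are pairwise distinct, so ordering the $\op{extract-min}()$ operations by execution time yields a strict total order; in particular the phrase ``the $k$th earliest'' is unambiguous, and $em[1], em[2], \dots$ is precisely the list of all $\op{extract-min}()$ operations sorted by increasing execution time.

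From here the statement is immediate: if $i < j$, then $em[i]$ precedes $em[j]$ in this sorted list, which by construction means the execution time of $em[i]$ is strictly less than that of $em[j]$, i.e.\ $em[i]$ is executed before $em[j]$. There is essentially no obstacle here; the only ingredient the argument genuinely uses is \Cref{def:unique_op_times}, without which two extractions could share a time and ``the $k$th earliest'' would require an explicit tie-breaking convention before the ordering statement could even be phrased.
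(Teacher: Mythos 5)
Your argument is correct and matches the paper's intent exactly: the paper offers no separate proof for this observation beyond the clause ``since all operations occur at different times,'' and your unpacking of the definition of $em[k]$ together with \Cref{def:unique_op_times} is precisely that reasoning made explicit.
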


\begin{assumption}[Unique Elements] \label{def:unique_values}
    For simplicity, we assume that all elements inserted into the priority queue are distinct.
\end{assumption}
This assumption can be ignored if we do the following: Let $x$ be the element and $t$ be its insertion time. Instead of referring to $x$ directly, refer to it as $(x, t)$. In this way, all elements become effectively unique again, since $t$ is unique (\Cref{def:unique_op_times}).

\begin{observation} \label{obs:order_val}
    Since all elements are unique, we have that $val[i] < val[j]$ for all $i < j$.
\end{observation}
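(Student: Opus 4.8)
The plan is to derive the statement directly from the meaning of the notation $val[k]$ together with \Cref{def:unique_values}. By definition, $val[k]$ is the $k$th smallest element inserted into the priority queue; that is, we enumerate the (finite) set $E$ in increasing order of value and take the element occupying position $k$. The first step is to check that this enumeration is well-defined: by the Domain assumption all elements are real numbers and hence totally ordered, and by \Cref{def:unique_values} no two inserted elements coincide, so there are no ties and the ``$k$th smallest'' element is unambiguous for each $k \le \abs{E}$.

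Given this, the claimed inequality is immediate. For any indices $i < j$ for which both $val[i]$ and $val[j]$ are defined, $val[i]$ occupies an earlier position than $val[j]$ in the strictly increasing enumeration of $E$, and therefore $val[i] < val[j]$. I do not anticipate any obstacle here; the only point worth making explicit in the write-up is that it is precisely the uniqueness guarantee of \Cref{def:unique_values} that upgrades the weak inequality $val[i] \le val[j]$ — which would hold for any sorted enumeration, even with repeated values — to the strict inequality asserted. This exactly parallels the role that \Cref{def:unique_op_times} plays in \Cref{obs:order_em}.
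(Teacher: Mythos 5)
Your proof is correct and matches the paper's (implicit, one-line) justification: the observation follows directly from the definition of $val[k]$ as the $k$th smallest element together with \Cref{def:unique_values}, which rules out ties and makes the sorted enumeration strictly increasing. Your added remark that uniqueness is precisely what upgrades $\le$ to $<$ is a fair and accurate gloss on why the assumption is cited.
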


\begin{assumption}[Consistency] \label{def:valid_ops}
    We assume that only valid operations are performed. Specifically, no $\op{extract-min}()$ operation is performed at a time when the priority queue is empty, and no operation breaks the monotonic property of the priority queue after its execution.
\end{assumption}

\begin{assumption}[Existence Boundaries] \label{def:order_existence}
    We assume that if an element is extracted at time $t$, then it does not exist at time $t$. Also, if an element is inserted at time $t$, then it is present in the priority queue at time $t$.
\end{assumption}

\section{Key Ideas} \label{sec:key_ideas}

In this section, we will establish lemmas about monotonic priority queues that will be helpful to design an \shortFullyRetroactiveMonoPQ{} (fully retroactive monotonic priority queue) in the next section. More specifically, we find a way to efficiently check if an element exists at any time (\Cref{ssec:existence,ssec:fast_last_extracted}) and use that to reduce the query of a retroactive monotonic priority queue to the range-searching problem (\Cref{ssec:range_search_reduction}).

\subsection{Conditions of Existence} \label{ssec:existence}

In order to implement the $\op{GetMin}(t)$ operation, we need to return the minimum element that is present in the priority queue at time $t$. Therefore, $\op{GetMin}(t)$ returns $\min \set{x \mid x \in E,\, x \text{ exists at time } t}$. Since finding the minimum depends on first checking if an element exists, we will focus on finding the minimum later in \Cref{ssec:range_search_reduction}. For now, we will focus on efficiently checking if an element exists.

To check if an element exists at time $t$, we can use the following observations:
\begin{observation} \label{obs:existence_after_insertion}
    After an element is inserted into the priority queue, the only way it can stop existing is by being extracted.
\end{observation}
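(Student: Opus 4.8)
The plan is to argue directly from the semantics of the three underlying priority-queue operations, applied to the fixed timeline of operations that the current configuration resolves to. Fix an element $x$ and suppose $\op{insert}(x)$ occurs at time $t_{ins}$. The precise claim I would establish is: for every $t \ge t_{ins}$, either $x$ is present in the priority queue at time $t$, or some $\op{extract-min}()$ operation occurring in $(t_{ins}, t]$ extracted $x$; in particular, $x$ never leaves the collection except via an extraction. This is proved by induction along the chronological order of the operations occurring after $t_{ins}$, which is a well-defined linear order by \Cref{def:unique_op_times}.

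For the base case, \Cref{def:order_existence} (Existence Boundaries) guarantees that $x$ is present at time $t_{ins}$. For the inductive step, assume $x$ is present immediately before some operation $o$ at time $t' > t_{ins}$, and case on the type of $o$. If $o$ is an $\op{insert}$, it only adds an element and removes none, so $x$ remains present. If $o$ is a $\op{get-min}$, it does not modify the collection, so $x$ remains present. If $o$ is an $\op{extract-min}()$, then by \Cref{def:valid_ops} (Consistency) the priority queue is nonempty at that time, so the operation removes exactly one element -- the current minimum; if that minimum is $x$, then $x$ is extracted here (the ``being extracted'' case), and otherwise $x$ is strictly larger than the minimum and remains present. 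Propagating this along the chronological order, and invoking \Cref{def:order_existence} again at the extraction time in the case an extraction does occur, yields the claim.

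I would also remark briefly on why retroactivity does not open a loophole: the statement concerns a \emph{fixed} set of operations that contains $\op{insert}(x)$, and each retroactive $\op{Delete}$ or $\op{Insert}$ update is resolved by re-reading the resulting timeline of ordinary priority-queue operations; within any such timeline the case analysis above shows that the only operation capable of removing an already-present element is $\op{extract-min}()$. Consequently there is no deep obstacle here -- the one point that needs care is making the $\op{extract-min}()$ case precise, namely using the Consistency assumption to ensure the queue is nonempty so that the extraction acts on a well-defined minimum, and observing that the minimum is the only element whose membership status the operation can change.
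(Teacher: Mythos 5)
Your proposal is correct and rests on the same core idea as the paper's one-line justification, namely that among the available operations only $\op{extract-min}()$ can remove an element from the collection; you have merely formalized this as an induction over the chronological order of operations, with the Consistency and Existence Boundaries assumptions invoked where relevant. No substantive difference in approach.
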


\begin{observation} \label{obs:existence_range}
    An element $x$ exists only from its $\op{insertionTime}(x)$ until just before its $\op{extractionTime}(x)$. That is, $x$ only exists at times $t \in [\op{insertionTime}(x),\allowbreak\op{extractionTime}(x))$.
\end{observation}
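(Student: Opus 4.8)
The plan is to prove both halves of the claim---that $x$ is present at \emph{every} time in $[\op{insertionTime}(x), \op{extractionTime}(x))$ and absent at every other time---by combining the Existence Boundaries assumption (\Cref{def:order_existence}) with \Cref{obs:existence_after_insertion}. Throughout, I would use that under \Cref{def:unique_op_times} and \Cref{def:unique_values} every element has exactly one insertion and at most one extraction, so that $\op{insertionTime}(x)$ and $\op{extractionTime}(x)$ are well defined (with the convention $\op{extractionTime}(x) = \infty$ when $x$ is never extracted).

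First I would settle the endpoints: \Cref{def:order_existence} directly gives that $x$ is present at $\op{insertionTime}(x)$, so the interval is closed on the left, and---when $\op{extractionTime}(x)$ is finite---that $x$ is not present at $\op{extractionTime}(x)$, so it is open on the right. Next, for an interior time $t$ with $\op{insertionTime}(x) < t < \op{extractionTime}(x)$, I would argue by \Cref{obs:existence_after_insertion} that, once $x$ has been inserted, the only event that can make it cease to exist is its own extraction; since that extraction occurs at $\op{extractionTime}(x) > t$, no such event has happened by time $t$, so $x$ still exists. This, together with the endpoint analysis, shows $x$ exists at every $t \in [\op{insertionTime}(x), \op{extractionTime}(x))$.

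Finally, for the complement I would note that for $t < \op{insertionTime}(x)$ the element has simply not been inserted yet (and, being unique, is inserted nowhere else), hence cannot be present; and for $t > \op{extractionTime}(x)$, once $x$ has been extracted it is removed from the queue and---again by uniqueness of operations and elements---cannot reappear, so it is absent. Combining these cases yields the claim.

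I do not expect a genuine obstacle here, since this is essentially an unpacking of the definitions and assumptions; the only points requiring care are keeping the boundary conventions consistent with \Cref{def:order_existence} at the times $\op{insertionTime}(x)$ and $\op{extractionTime}(x)$, and treating the never-extracted case $\op{extractionTime}(x) = \infty$ uniformly so that the stated half-open interval degenerates correctly to $[\op{insertionTime}(x), \infty)$.
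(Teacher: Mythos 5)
Your proposal is correct and matches the paper's own justification, which likewise derives the observation from \Cref{obs:existence_after_insertion}, the Existence Boundaries assumption (\Cref{def:order_existence}), and the uniqueness of elements (\Cref{def:unique_values}); you simply unpack the same three ingredients in more detail.
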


\begin{remark*}
    \Cref{obs:existence_after_insertion} is true because the only operation that modifies the priority queue other than insertion is the $\op{extract-min}()$ operation. \Cref{obs:existence_range} is true because the only way for the elements to stop existing is to be extracted (\Cref{obs:existence_after_insertion}); they exist at the time of insertion but not at the time of extraction (\Cref{def:order_existence}); and all elements are distinct (\Cref{def:unique_values}).
\end{remark*}

Using \Cref{obs:existence_range}, it is possible to check if an element exists at time $t$. So, $\op{GetMin}(t)$ returns \[\min \set{x \mid x \in E,\, \op{insertionTime}(x) \leq t < \op{extractionTime}(x)}.\] However, this is not very useful if we want to support retroactive operations, since one retroactive operation could change the extraction time of $O(m)$ elements, and we would need to update all of them (see \Cref{ssec:characteristics_background}).

Therefore, we need a way to check for element existence that depends on values that are easy to maintain when a retroactive operation occurs. Because of that, we show the following:

\begin{lemma}[Conditions for Existence] \label{lem:existence}
    In a monotonic priority queue, an element $x$ exists at time $t$ iff $\op{insertionTime}(x) \leq t$ and $x > \op{lastExtracted}(t)$.
\end{lemma}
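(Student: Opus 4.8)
The plan is to prove the biconditional by showing each direction separately, using the characterizations of existence already established. For the forward direction, suppose $x$ exists at time $t$. Then by \Cref{obs:existence_range} we have $\op{insertionTime}(x) \le t < \op{extractionTime}(x)$, so in particular $\op{insertionTime}(x) \le t$, giving the first condition immediately. For the second condition, I would argue that $x > \op{lastExtracted}(t)$ by contradiction: if $\op{lastExtracted}(t) \ge x$, then (since all elements are distinct by \Cref{def:unique_values} and $x$ is still present, so it cannot itself be the last extracted one) there is some element $y \ge x$, in fact $y > x$ or $y = \op{lastExtracted}(t)$, extracted at some time $t' \le t$. The monotonic constraint says extracted elements form a nondecreasing sequence, so every element extracted at or before time $t'$ is $\le \op{lastExtracted}(t)$; but more usefully, since $x$ is present at time $t$ and has not been extracted by time $t$, yet an element of value $\ge x$ was extracted at time $t' \le t$, this would force $x$ to have been the minimum available at some earlier extraction or contradict monotonicity. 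I should be careful here: the cleanest route is to use the \emph{first} form of the monotonic constraint together with the fact that $x$ was available (inserted, not yet extracted) at the time $\op{lastExtracted}(t)$ was extracted, and $x \le \op{lastExtracted}(t)$ would mean a smaller-or-equal element was skipped, violating that $\op{extract-min}$ takes the minimum — or, even more directly, invoke that inserting $x$ with $\op{insertionTime}(x) \le t$ while $x \le \op{lastExtracted}(t')$ for some $t' \le t$ violates the second form of the monotonic constraint (elements smaller than the last extracted cannot be inserted), combined with \Cref{def:valid_ops}.

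For the reverse direction, suppose $\op{insertionTime}(x) \le t$ and $x > \op{lastExtracted}(t)$; I want to conclude $x$ exists at time $t$. By \Cref{obs:existence_range} it suffices to show $t < \op{extractionTime}(x)$ (we already have $\op{insertionTime}(x) \le t$). Suppose for contradiction that $x$ has been extracted at or before time $t$, i.e. $\op{extractionTime}(x) \le t$. Then $x$ is an element extracted no later than $t$, so by definition of $\op{lastExtracted}$ and the monotonicity of the extracted sequence (\Cref{obs:order_em} plus the nondecreasing-extractions constraint), we get $x \le \op{lastExtracted}(t)$, contradicting $x > \op{lastExtracted}(t)$. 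This handles the case where $x$ was extracted; if $x$ was never extracted then $\op{extractionTime}(x) = \infty > t$ trivially, and combined with $\op{insertionTime}(x) \le t$ and \Cref{obs:existence_after_insertion} (the only way to stop existing is extraction), $x$ still exists at $t$.

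I expect the main obstacle to be the forward direction, specifically pinning down precisely why $x > \op{lastExtracted}(t)$ must hold — this is where the monotonic property does real work, and one must be careful to use the right formulation of it (the ``cannot insert something smaller than the last extracted'' form interacts most cleanly with \Cref{def:valid_ops}, since $x$ being present at time $t$ means it was legitimately inserted at a time $\le t$, and if $\op{lastExtracted}(t) \ge x$ the relevant extraction happened at some $t' \le t$; then either $\op{insertionTime}(x) \le t'$, making $x$'s insertion illegal, or $\op{insertionTime}(x) > t'$, in which case $\op{lastExtracted}$ at the insertion time is still $\ge \op{lastExtracted}(t') \ge x$ by \Cref{obs:order_em}, again making the insertion illegal). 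The reverse direction is essentially immediate from the definitions and \Cref{obs:existence_range}. I would also note at the outset that throughout we may freely use \Cref{obs:existence_range} to translate ``exists at time $t$'' into the inequality $\op{insertionTime}(x) \le t < \op{extractionTime}(x)$, so the whole lemma reduces to showing the equivalence of ``$t < \op{extractionTime}(x)$'' with ``$x > \op{lastExtracted}(t)$'' under the standing hypothesis $\op{insertionTime}(x) \le t$.
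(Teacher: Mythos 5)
Your proposal is correct and follows essentially the same route as the paper: the backward direction reduces, via \Cref{obs:existence_range}, to showing that $x$ cannot have been extracted by time $t$ (since that would force $x \le \op{lastExtracted}(t)$), and the forward direction derives $x > \op{lastExtracted}(t)$ by contradiction from the monotonic constraint, with the equality case dispatched by uniqueness of elements. One small correction to your final packaging: in the branch $\op{insertionTime}(x) \le t'$ (where $t'$ is the time $\op{lastExtracted}(t)$ is extracted), the insertion of $x$ is \emph{not} what becomes illegal --- at insertion time nothing $\ge x$ need yet have been extracted; the contradiction in that branch is the one you state earlier, namely that the $\op{extract-min}$ at $t'$ returns an element larger than the then-available $x$ (equivalently, $x$ would have to be extracted after a larger element, which is how the paper phrases its Case~2). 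Your other branch ($\op{insertionTime}(x) > t'$, where inserting below the last extracted value is forbidden) is fine, and the explicit split on insertion time is actually a point where you are more careful than the paper's one-line argument.
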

\begin{proof}
    We first prove the forward direction, and then the backward direction. For both directions, assume it is a monotonic priority queue.

    \uline{Forward Direction (Necessary Condition):}

    If $x$ exists at time $t$, then $\op{insertionTime}(x) \leq t$. This is true because if $x$ was inserted after time $t$, it would not exist at time $t$ yet (\Cref{obs:existence_range}).
    
    Assume $x$ exists at time $t$. Assume for a contradiction that $x \leq \op{lastExtracted}(t)$. There are two cases ($=$ and $<$). We show that a contradiction occurs in both.
    
    \textit{Case 1: $x = \op{lastExtracted}(t)$}. Since $\op{lastExtracted}(t)$ was extracted, it does not exist at time $t$. And $x$ exists at time $t$. They have the same value, which implies that they are the same element (\Cref{def:unique_values}). However, one exists at time $t$ while the other does not, which is a contradiction.
    
    \textit{Case 2: $x < \op{lastExtracted}(t)$}. Since, at time $t$, $\op{lastExtracted}(t)$ was extracted and $x$ was not, $x$ is extracted after a greater element was extracted, contradicting that it is a monotonic priority queue.
    
    Therefore, if $x$ exists at time $t$, then $x > \op{lastExtracted}(t)$. This completes the forward direction.

    \uline{Backward Direction (Sufficient Condition):}
    
    Assume that $x > \op{lastExtracted}(t)$. Assume that $\op{insertionTime}(x) \leq t$.
    
    Assume for a contradiction that $x$ does not exist at time $t$. By \Cref{obs:existence_after_insertion}, the only way for it to not exist at time $t$ is if $\op{extractionTime}(x) \leq t$. However, this contradicts the assumption that it is a monotonic priority queue: Since $x > \op{lastExtracted}(t)$, $x$ is not $\op{lastExtracted}(t)$ and is greater than it. This implies that $x$ was extracted at some moment and then $\op{lastExtracted}(t)$ was extracted after, creating a decreasing function over time. This completes the backward direction, and hence the proof.
\end{proof}

Directly from \Cref{lem:existence}, we have a new way of finding the minimum element in the priority queue at any time.

\begin{corollary}[GetMin Expression] \label{cor:get_min_expression}
    The $\op{GetMin}(t)$ operation returns
    \begin{equation} \label{eq:getmin}
            \min\,\{\,x \mid x \in E,\, \op{insertionTime}(x) \le t,\, x > \op{lastExtracted}(t)\,\}.
    \end{equation}
\end{corollary}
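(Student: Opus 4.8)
The plan is to derive the corollary directly from \Cref{lem:existence} together with the definition of $\op{GetMin}$. First I would recall what was already observed at the start of \Cref{ssec:existence}: by definition, $\op{GetMin}(t)$ returns the minimum element present in the priority queue at time $t$, that is, $\min\{x \mid x \in E,\ x \text{ exists at time } t\}$.

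Next I would invoke \Cref{lem:existence}, which states that in a monotonic priority queue an element $x$ exists at time $t$ if and only if $\op{insertionTime}(x) \le t$ and $x > \op{lastExtracted}(t)$. Substituting this equivalence into the set-builder predicate above does not change the set being described, hence does not change its minimum; the resulting expression is exactly \Cref{eq:getmin}. This is the entire argument.

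The only point that needs a word of care is the degenerate case in which the priority queue is empty at time $t$, so that the set in question is empty: by \Cref{def:valid_ops} no $\op{extract-min}$ is ever issued on an empty queue, and for a $\op{GetMin}(t)$ query on an empty queue both sides of the claimed identity reduce to the minimum of the empty set, so the identity still holds vacuously. I do not expect any genuine obstacle here — the statement is a one-line consequence of the lemma, with all the real work already carried out in the proof of \Cref{lem:existence}.
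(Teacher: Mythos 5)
Your proposal is correct and matches the paper's treatment: the paper presents this corollary as following \enquote{directly from} \Cref{lem:existence}, i.e., by substituting the existence characterization into $\min \set{x \mid x \in E,\, x \text{ exists at time } t}$, which is exactly your argument. Your extra remark on the empty-queue case is a harmless (and reasonable) addition not present in the paper.
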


\subsection{Finding the Last Extracted Element} \label{ssec:fast_last_extracted}
Now we show a way to efficiently find $\op{lastExtracted}(t)$. For that, we can maintain the list of $\op{extract-min}()$ operations in a binary search tree and find the last extraction until time $t$, via a predecessor search on $t$. However, we have no way yet of finding which element this operation extracts. Thus, we prove the following lemma:
\begin{lemma} \label{lem:shift}
    In a monotonic priority queue, $val[k]$ can only be extracted by $em[k]$.
\end{lemma}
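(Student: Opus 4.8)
The plan is to prove the statement by a counting argument, using the basic fact that an $\op{extract-min}()$ operation always removes the element that is currently minimum. Suppose $val[k]$ is actually extracted (otherwise there is nothing to prove, and likewise if fewer than $k$ elements are ever inserted), and let $em[j]$ be the operation that extracts it. I will show $j \ge k$ and $j \le k$, which forces $j = k$ and hence that the only operation that can extract $val[k]$ is $em[k]$.

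For the bound $j \ge k$: by \Cref{obs:order_val} the elements $val[1], \dots, val[k-1]$ are all strictly smaller than $val[k]$. Since $em[j]$ extracts the minimum element present at its time, every element present just before $em[j]$ is at least $val[k]$, so none of $val[1], \dots, val[k-1]$ is present just before $em[j]$. Moreover, each such $val[i]$ must have been inserted before $em[j]$: were it inserted afterwards, then at its insertion time $\op{lastExtracted}$ would already be at least $val[k] > val[i]$ (because $em[j]$ extracted $val[k]$ and, by the monotonic property, any later extraction extracts something at least as large), so inserting $val[i]$ would violate the monotonic constraint, contradicting \Cref{def:valid_ops}. Having been inserted before $em[j]$ yet absent just before it, \Cref{obs:existence_after_insertion} forces each $val[i]$ with $i < k$ to have been extracted, necessarily by one of $em[1], \dots, em[j-1]$. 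These $j-1$ extractions therefore account for the $k-1$ distinct elements $val[1], \dots, val[k-1]$, giving $j - 1 \ge k - 1$.

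For the bound $j \le k$: by \Cref{obs:order_em} the operations $em[1], \dots, em[j]$ occur in this order, so the sequence of elements they extract is nondecreasing by the monotonic property and strictly increasing by \Cref{def:unique_values}, with last term $val[k]$. Hence $em[1], \dots, em[j-1]$ extract $j-1$ distinct elements, each strictly smaller than $val[k]$. Since $val[1], \dots, val[k-1]$ are exactly the inserted elements smaller than $val[k]$, we obtain $j - 1 \le k - 1$. Combining the two bounds yields $j = k$.

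The step I expect to be the main obstacle is the bound $j \ge k$: it is where the monotonic insertion constraint of \Cref{def:valid_ops} is essential, since one must rule out the scenario in which some $val[i]$ with $i < k$ is inserted only after $em[j]$ (in which case $em[j]$ could in principle extract an element that is not the $k$-th smallest inserted overall). The remaining work is a routine double-counting of extractions against the elements below $val[k]$.
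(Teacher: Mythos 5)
Your proof is correct, and it follows the same underlying idea as the paper's --- counting the elements smaller than $val[k]$ against the extractions preceding the one that removes $val[k]$ --- but it is organized differently and is noticeably more careful. The paper argues by contradiction in two cases ($j<k$ and $j>k$), invoking the pigeonhole principle to claim that some element smaller than $val[k]$ would have to be extracted out of order; it also spends a separate part showing that $val[k]$ being extracted by $em[k]$ is consistent with monotonicity. You instead derive the two inequalities $j\ge k$ and $j\le k$ directly. The real added value of your version is in the $j\ge k$ direction: you justify why every $val[i]$ with $i<k$ must actually be inserted before $em[j]$ (via the monotonic insertion constraint of \Cref{def:valid_ops}) and why it must actually be extracted by some earlier $em$ (it is absent just before $em[j]$ because $em[j]$ extracts the minimum, and \Cref{obs:existence_after_insertion} leaves extraction as the only explanation). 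The paper's pigeonhole step implicitly assumes that all of $val[1],\dots,val[k-1]$ get extracted at some point, which your argument makes explicit and proves. Your $j\le k$ direction matches the paper's Case~2 in substance. One small presentational point: you correctly note that nothing needs to be shown when $val[k]$ is never extracted, which is the right reading of ``can only be extracted by $em[k]$'' and lets you drop the paper's Part~1 entirely.
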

\begin{proof}
    
    This lemma will be proven in two parts. For both parts, assume it is a monotonic priority queue.

    \uline{Part 1: $val[k]$ can be extracted by $em[k]$.}
    
    Assume $val[k]$ is extracted by $em[k]$. By \Cref{obs:order_val}, $val[k] < val[l]$ for all $k < l$. By \Cref{obs:order_em}, $em[k]$ is executed before $em[l]$ for all $k < l$.
    
    If $val[k]$ is extracted by $em[k]$ for all possible $k$, then $val[k]$ is extracted before $val[l]$ for all $k < l$. Therefore, the extracted values create an increasing function as $k$ increases (that is, as time passes). This satisfies the monotonic priority queue restriction, showing that it is possible for this to happen.

    \uline{Part 2: if $j \neq k$, $val[k]$ cannot be extracted by $em[j]$.}

    For $j$ to be different from $k$, $j$ needs to be greater or less than $k$. Therefore, we prove this part by these two cases.

    \textit{Case 1: $j < k$.} Assume for a contradiction that $val[k]$ is extracted by $em[j]$ and $j < k$. By the pigeonhole principle, there is an element smaller than $val[k]$ that needs to be extracted by $em[l]$ and $l \geq k$. Since this element is smaller than $val[k]$ and is being extracted after $val[k]$, the extracted values form a decreasing sequence at some point, which contradicts the assumption that it is a monotonic priority queue.

    \textit{Case 2: $j > k$.} Assume for a contradiction that $val[k]$ is extracted by $em[j]$ and $j > k$. By the pigeonhole principle, there is an element greater than $val[k]$ that needs to be extracted by $em[l]$ and $l \leq k$.  Since this element is greater than $val[k]$ and is being extracted before $val[k]$, the extracted values form a decreasing sequence at some point, which contradicts the assumption that it is a monotonic priority queue.

    Since the $k$th element cannot be extracted by any $j$ smaller or greater than $k$, $j$ cannot be different from $k$. Thus, this part of the proof is complete.

    In Part 1, we proved that $val[k]$ can be extracted by $em[k]$. In Part 2, we proved that it cannot be extracted by any other $\op{extract-min}()$ operation. Therefore, $val[k]$ can only be extracted by $em[k]$.
\end{proof}

Since we can now find which element each $\op{extract-min}()$ operation extracts, efficiently finding $\op{lastExtracted}(t)$ becomes straightforward.
\begin{lemma} \label{cor:fast_last_extracted}
    In a monotonic priority queue, there is a procedure that, given $t$, finds $\op{lastExtracted}(t)$ using $O(\log m)$ time.
\end{lemma}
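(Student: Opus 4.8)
The plan is to reduce the computation of $\op{lastExtracted}(t)$ to two order-statistics queries — one over the $\op{extract-min}()$ operations and one over the element set $E$ — and then glue them together with \Cref{lem:shift}.

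First I would store the times at which $\op{extract-min}()$ operations occur in a balanced binary search tree $\mathcal{T}_{em}$, augmented with subtree sizes so that it supports both predecessor search and rank queries. Given a query time $t$, a predecessor search for $t$ in $\mathcal{T}_{em}$ finds the latest $\op{extract-min}()$ operation occurring at or before $t$; if no such operation exists, then nothing has been extracted by time $t$ and the procedure returns $-\infty$. Otherwise, using the subtree-size augmentation, we also obtain the rank $k$ of this operation among all $\op{extract-min}()$ operations ordered by time, and by \Cref{obs:order_em} this time order coincides with the index order, so the operation found is exactly $em[k]$. Since $\mathcal{T}_{em}$ has $O(m)$ nodes, this costs $O(\log m)$ time.

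Next, by \Cref{lem:shift}, the operation $em[k]$ can only extract $val[k]$, and since by \Cref{def:valid_ops} no $\op{extract-min}()$ is ever performed on an empty queue, there are at least $k$ elements in $E$, so $val[k]$ is well defined (and, by \Cref{obs:order_val}, uniquely determined as the $k$th smallest element). Maintaining a second balanced binary search tree $\mathcal{T}_E$ over $E$, again augmented with subtree sizes, lets us retrieve $val[k]$ with a single select query in $O(\log m)$ time. This $val[k]$ is precisely the last element extracted at or before time $t$, i.e.\ $\op{lastExtracted}(t)$.

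I do not expect a serious obstacle here; the only delicate point is the bookkeeping that ties the rank returned by $\mathcal{T}_{em}$ to the select query in $\mathcal{T}_E$ — that the index $k$ coming from the "time rank" of the predecessor $\op{extract-min}()$ is the same $k$ that indexes $val[k]$ — which is exactly the content of \Cref{lem:shift} combined with \Cref{obs:order_em,obs:order_val}. One should also confirm the boundary convention of \Cref{def:order_existence}: an $\op{extract-min}()$ occurring exactly at time $t$ counts as "extracted by time $t$", which is why the predecessor search uses $\le t$ rather than $< t$, consistent with the statement of \Cref{lem:existence}.
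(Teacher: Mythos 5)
Your proof is correct and follows essentially the same route as the paper's: an order-statistic tree over extraction times gives the predecessor of $t$ and its rank $k$, and an order-statistic tree over the elements then yields $val[k]$, with \Cref{lem:shift} justifying that this is $\op{lastExtracted}(t)$. Your explicit handling of the empty case (returning $-\infty$) and of the $\le t$ boundary convention are minor additions the paper leaves implicit.
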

\begin{proof}
    Let all inserted elements be stored in an order-statistic tree $T_{el}$~\cite[Sec.~17.1]{cormen_introduction_2022}. Let all extraction times be stored in an order-statistic tree $T_{em}$. To find $\op{lastExtracted}(t)$, we can execute the following algorithm:
    \begin{itemize}
        \item[(1)] Find the predecessor of $t$ in $T_{em}$, calling it $\var{lastOp}$.
        \item[(2)] Find the order of $\var{lastOp}$ in $T_{em}$, calling it $k$.
        \item[(3)] Find the $k$th smallest element of $T_{el}$, calling it $x$.
        \item[(4)] Return $x$.
    \end{itemize}

    We claim that $x = \op{lastExtracted}(t)$. The proof is as follows. In Step 1, we find $\var{lastOp}$, the last extraction operation until time $t$. (Note that $\var{lastOp}$ extracts $\op{lastExtracted}(t)$. Hence, we need to return the element that $\var{lastOp}$ extracts.) In Step 2, we find its order $k$. Hence, $\var{lastOp} = em[k]$. We know that $\var{lastOp} = em[k]$ extracts $val[k]$ (\Cref{lem:shift}). Thus, in Step 3, we find $x = val[k]$. Since $x$ is the element extracted by $\var{lastOp}$, $x = \op{lastExtracted}(t)$.

    Finding the predecessor of a value, the order of a value, and the $k$th smallest element in an order-statistic tree can be done in $O(\log n)$ time each, where $n$ is the number of elements in it. Since the trees store all operations, they can have at most $m$ elements. Therefore, $n \leq m$, and the algorithm takes $O(\log m)$ time in total.
\end{proof}

\begin{remark*}
    We will not use it, but we can find $\op{extractionTime}(x)$ in $O(\log m)$ time in a similar way: (1) Find the order of $x$, calling it $k$; (2) find the operation that extracts $x$ (by finding $em[k]$); (3) return the time that $em[k]$ is performed.
\end{remark*}

\subsection{Reduction to Range Searching} \label{ssec:range_search_reduction}
Now that we know that $\op{GetMin}(t)$ is equivalent to Expression~\ref{eq:getmin} and that we can efficiently find $\op{lastExtracted}(t)$, we just need to find the minimum among all elements that exist at time $t$.

To do that efficiently, we will represent the monotonic priority queue geometrically. The idea comes directly from Expression~\ref{eq:getmin}. We plot all the elements and their insertion times as points in a 2D plane. Once we do that, we observe that the two conditions of existence of an element at time $t$ (\Cref{lem:existence}) correspond to a rectangle intersection. The $\op{GetMin}(t)$ query then becomes finding the lowest point that intersects the rectangle. \Cref{fig:getmin} shows an example of this representation. Here is a detailed specification:

\begin{lemma} \label{lem:reduction}
    We can represent a monotonic priority queue in the following way: All elements $e$ in $E$ become points $(\op{insertionTime}(e), e)$ in a plane. 

    Let $R(t)$ be the two-sided rectangle $(-\infty, t] \times (\op{lastExtracted}(t), \infty)$. Let $p$ be the point with the minimum $y$-coordinate that intersects $R(t)$. In this representation, the $\op{GetMin}(t)$ query returns the $y$-coordinate of $p$.
\end{lemma}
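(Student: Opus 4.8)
The plan is to show that the geometric representation described in the lemma is nothing more than a faithful restatement of the characterization of $\op{GetMin}(t)$ already obtained in \Cref{cor:get_min_expression}. That corollary tells us that $\op{GetMin}(t)$ returns $\min\{x \mid x \in E,\ \op{insertionTime}(x) \le t,\ x > \op{lastExtracted}(t)\}$. So it suffices to argue that the set of elements appearing in this minimization is exactly the set of points that intersect the rectangle $R(t)$, and that taking the minimum of the $x$-values (which are the element values) corresponds to picking the point of minimum $y$-coordinate.

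First I would set up the correspondence: each element $e \in E$ maps to the point $(\op{insertionTime}(e), e)$, so the $x$-coordinate of the point is the insertion time and the $y$-coordinate is the element's value. Then I would observe that a point $(\op{insertionTime}(e), e)$ lies in $R(t) = (-\infty, t] \times (\op{lastExtracted}(t), \infty)$ if and only if $\op{insertionTime}(e) \le t$ (the first coordinate lies in $(-\infty, t]$) and $e > \op{lastExtracted}(t)$ (the second coordinate lies in $(\op{lastExtracted}(t), \infty)$). By \Cref{lem:existence}, these two conditions together are exactly the statement that $e$ exists at time $t$. Hence the points intersecting $R(t)$ are precisely the images of the elements that exist at time $t$, i.e. the elements over which \Cref{cor:get_min_expression} minimizes.

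Next I would close the loop: since the $y$-coordinate of the point for $e$ equals $e$ itself, the element achieving the minimum value among those existing at time $t$ corresponds to the intersecting point with smallest $y$-coordinate, call it $p$; and the value returned, namely that minimum element, equals the $y$-coordinate of $p$. I should also note why $p$ is well-defined: by \Cref{def:valid_ops} a $\op{GetMin}(t)$ query is only issued when the queue is nonempty at time $t$, so at least one element exists at time $t$, hence at least one point intersects $R(t)$; and by \Cref{def:unique_values} the $y$-coordinates are distinct, so the minimum-$y$ point is unique. Combining these observations with \Cref{cor:get_min_expression} gives that $\op{GetMin}(t)$ returns the $y$-coordinate of $p$, which is the claim.

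I do not expect a genuine obstacle here; the lemma is essentially a geometric repackaging of \Cref{cor:get_min_expression} via \Cref{lem:existence}, and all the real content was already proved. The only point requiring a little care is making sure the rectangle's half-open/open boundaries match the existence interval exactly — $\op{insertionTime}(e) \le t$ is a non-strict inequality (closed side at $t$) while $e > \op{lastExtracted}(t)$ is strict (open side at $\op{lastExtracted}(t)$) — and that this strictness is consistent with \Cref{lem:existence} and \Cref{def:order_existence}. Once that bookkeeping is checked, the proof is a short chain of equivalences.
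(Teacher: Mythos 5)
Your proposal is correct and follows essentially the same route as the paper's proof: identify membership in $R(t)$ with the two existence conditions of \Cref{lem:existence}, then observe that the $y$-coordinate encodes the element value so the minimum-$y$ point is the minimum existing element. The extra remarks on well-definedness (nonemptiness via \Cref{def:valid_ops} and uniqueness via \Cref{def:unique_values}) are a harmless addition the paper omits.
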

\begin{proof}
    For a point $(\op{insertionTime}(e),\allowbreak e)$ to intersect $R(t)$, it needs to have $\op{insertionTime}(e) \leq t$ and $e > \op{lastExtracted}(t)$, which are exactly the conditions of existence of an element at time $t$ (\Cref{lem:existence}). Therefore, all points that intersect $R(t)$ represent elements that exist at time $t$. And since the $y$-coordinate of a point stores the element it represents, the point with the minimum $y$-coordinate that is inside the rectangle represents the smallest element that exists at time $t$.
\end{proof}

\begin{figure}[htpb]
    \centering
    \includegraphics[width=0.6\textwidth]{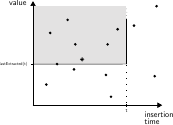}
    \caption{Illustration of \Cref{lem:reduction}. The point that intersects the rectangle is $\op{lastExtracted}(t)$. All points inside the shaded area (the rectangle) exist at time $t$, except $\op{lastExtracted}(t)$. The marked point is the point with minimum value among all the points that exist at time $t$.}
    \label{fig:getmin}
\end{figure}

The problem of getting information about the set of points within a given rectangle is a problem called \textit{range searching}. This is a well-studied problem with many variations, each with its own specialized data structures (see~\cite{goodman_range_2018} for a survey on the subject).

For our variation, we need to be able to insert and delete points (retroactive insertion). We also need to find the point with the minimum $y$-coordinate that intersects a given two-sided orthogonal rectangle ($\op{GetMin}(t)$ query). This can be achieved by using priority search trees~\cite{mccreight_priority_1985}, which support all these operations in $O(\log n)$ time and $O(n)$ space, where $n$ is the number of points stored in the data structure.

\section{The Data Structure} \label{sec:data_structure}
Using what we know from the previous section, we can now design an efficient \shortFullyRetroactiveMonoPQ{}.

\begin{theorem} \label{thm:rmpq}
    There is an \shortFullyRetroactiveMonoPQ{} with the following costs:
    \begin{itemize}
        \item $O(\log m)$ time per $\op{Insert/Delete}(\op{extract-min}, t)$ operation.
        \item $O(\log m)$ time per $\op{Insert/Delete}(\op{insert}(x), t)$ operation.
        \item $O(\log m)$ time per $\op{GetMin}(t)$ operation.
        \item $O(m)$ space.
    \end{itemize}
\end{theorem}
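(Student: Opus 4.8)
The plan is to instantiate the generic construction of \Cref{thm:rmpq} with a concrete \shortCustomRangeTree{}. That theorem already reduces the whole problem to supplying a data structure meeting \Cref{def:miny_ds}, with the three parameters $U(m)$, $Q(m)$, $S(m)$ feeding directly into the stated bounds. So the only remaining work is to exhibit such a structure with $U(m) = Q(m) = O(\log m \log\log m)$ and $S(m) = O(m\log m)$, and then substitute.

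First I would take the dynamic 2D range tree of Mehlhorn and N\"aher as described above: a balanced primary tree on $x$-coordinates, each node carrying a $y$-ordered collection of (a superset of) the points in its subtree, with $O(n\log n)$ space, $O(\log n\log\log n)$ worst-case insert/delete (after the Dietz--Raman fix), and the fractional-cascading-style property that once a root-to-node walk has been performed with a fixed $y$ in hand, the successor of $y$ in that node's collection costs only $O(\log\log n)$. Then I would invoke the three-step min-$y$-in-rectangle procedure already given: (1) decompose the rectangle's horizontal range into $O(\log n)$ canonical collections along the primary tree; (2) in each, take the successor of the rectangle's lower $y$-bound and discard it if it falls outside the rectangle; (3) return the minimum $y$ among the survivors. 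Correctness is immediate because the canonical collections cover exactly the points in the horizontal range, and within each the successor of the lower $y$-bound is the smallest candidate in the vertical range; the running time is $O(\log n)$ for the walk plus $O(\log\log n)$ per collection, i.e.\ $O(\log n\log\log n)$. Hence this structure satisfies \Cref{def:miny_ds} with the claimed parameters.

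Finally, plugging $U(m) = Q(m) = O(\log m\log\log m)$ and $S(m) = O(m\log m)$ into \Cref{thm:rmpq}: the $\op{Insert}/\op{Delete}(\op{extract-min}, t)$ cost stays $O(\log m)$ since it never touches $T_{ins}$; the $\op{Insert}/\op{Delete}(\op{insert}(x), t)$ cost becomes $O(\log m + \log m\log\log m) = O(\log m\log\log m)$; the $\op{GetMin}(t)$ cost becomes $O(\log m + \log m\log\log m) = O(\log m\log\log m)$; and the space becomes $O(m + m\log m) = O(m\log m)$.

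I do not expect a genuine obstacle here — the statement is essentially an assembly step. The one point deserving care is checking that the Mehlhorn--N\"aher tree really implements \Cref{def:miny_ds} despite the two footnoted caveats (collections may store a \emph{superset} of their subtree's points, and the explicit tree layout is a reconstruction rather than verbatim from their paper): one must confirm that discarding, in Step (2), every successor lying outside the queried rectangle exactly cancels the effect of those extra points, so the procedure still returns the true minimum-$y$ point. This is routine but is the only place a reader might want an extra sentence of justification.
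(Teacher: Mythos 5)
Your proposal matches the paper's proof: both instantiate \Cref{thm:rmpq} with the Mehlhorn--N\"aher dynamic 2D range tree (with the Dietz--Raman worst-case fix), verify via the three-step min-$y$ query procedure that it satisfies \Cref{def:miny_ds}, and substitute $U(m)=Q(m)=O(\log m\log\log m)$ and $S(m)=O(m\log m)$ into the generic bounds. Your closing remark about the superset caveat is a fair observation but does not change the argument.
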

\begin{proof}
    We utilize three auxiliary data structures $T_{el}$, $T_{em}$, and $T_{ins}$:
    \begin{itemize}
        \item $T_{el}$: An order-statistic tree that stores all inserted elements.
        \item $T_{em}$: An order-statistic tree that stores all the times that an $\op{extract-min}()$ operation occurs.
        \item $T_{ins}$: A priority search tree that stores all inserted elements and their insertion times. More specifically, let $(t, x)$ be an ordered pair that represents an element $x$ inserted at time $t$. $T_{ins}$ stores the pairs $(t, x)$ as points.
    \end{itemize}

    \textit{Operations}. We first design the $\op{GetMin}(t)$ operation. As we showed in \Cref{ssec:range_search_reduction}, the $\op{GetMin}(t)$ operation can be solved by a query on a priority search tree. Hence, we simply query $T_{ins}$. But to do that, we first have to find $\op{lastExtracted}(t)$ to create the rectangle. Accordingly, this is how we can implement {\boldmath$\op{GetMin}(t)$}:

    \begin{itemize}
        \item[(1)] Find $\op{lastExtracted}(t)$.
        \item[(2)] Query $T_{ins}$ with the rectangle $(-\infty, t] \times (\op{lastExtracted}(t), \infty)$. Let $p$ be the returned point.
        \item[(3)] Return the $y$-coordinate of $p$.
    \end{itemize}

    Next, we design the other operations. Since $T_{el}$ and $T_{ins}$ store the elements, they need to be updated every time an $\op{Insert}/\op{Delete}(\op{insert}(x), t)$ operation is performed. Accordingly, this is how we can implement {\boldmath$\op{Insert}(\op{insert}(x), t)$}:
    \begin{itemize}
        \item[(1)] Insert $x$ into $T_{el}$.
        \item[(2)] Insert the pair $(t, x)$ into $T_{ins}$.
    \end{itemize}
    
    The $\op{Delete}(\op{insert}(x), t)$ operation is implemented similarly, but it removes the element instead of inserting it.
    
    Since $T_{em}$ stores all extraction times, it needs to be updated every time an $\op{Insert}/\allowbreak\op{Delete}(\op{extract-min}, t)$ operation is performed. Accordingly, this is how we can implement {\boldmath$\op{Insert}(\op{extract-min}, t)$}: Insert $t$ into $T_{em}$. The $\op{Delete}(\op{extract-min}, t)$ operation is implemented similarly, but it removes $t$ instead of inserting it.

    \textit{Time analysis}. Every operation involves only a constant number of calls to a priority search tree or order-statistic tree. Therefore, each operation takes $O(\log m)$ time.
    
    \textit{Space analysis}. Since each operation adds at most one element to each data structure, each data structure can have at most $m$ elements. Thus, $T_{el}$, $T_{em}$, and $T_{ins}$ use $O(m)$ space in total.
\end{proof}

\section{Conclusions}
In this paper, we showed that finding the minimum in a retroactive monotonic priority queue is a special case of the range-searching problem. Then, we designed an \shortFullyRetroactiveMonoPQ{} with $O(\log m)$ time per retroactive operation and $O(m)$ space. This result is optimal in the comparison model.

In future work, we intend to study the upper bound of other restricted versions of priority queues with full retroactivity, the upper bound of the general fully retroactive priority queue and/or the lower bound of retroactive priority queues.

\bibliography{paper}

\begin{thebibliography}{10}

\bibitem{goodman_range_2018}
Pankaj~K. Agarwal.
\newblock Range searching.
\newblock In Jacob~E. Goodman, Joseph O'Rourke, and Csaba~D. T{\'o}th, editors, {\em Handbook of Discrete and Computational Geometry}, Discrete Mathematics and Its Applications, pages 1057--1092. CRC Press, Boca Raton London New York, third edition, 2018.
\newblock URL: \url{https://www.csun.edu/~ctoth/Handbook/chap40.pdf}.

\bibitem{chen_nearly_2018}
Lijie Chen, Erik~D. Demaine, Yuzhou Gu, Virginia~Vassilevska Williams, Yinzhan Xu, and Yuancheng Yu.
\newblock Nearly {{Optimal Separation Between Partially}} and {{Fully Retroactive Data Structures}}.
\newblock {\em LIPIcs, Volume 101, SWAT 2018}, 101:33:1--33:12, 2018.
\newblock \href {https://doi.org/10.4230/LIPICS.SWAT.2018.33} {\path{doi:10.4230/LIPICS.SWAT.2018.33}}.

\bibitem{cherkassky_buckets_1999}
Boris~V. Cherkassky, Andrew~V. Goldberg, and Craig Silverstein.
\newblock Buckets, {{Heaps}}, {{Lists}}, and {{Monotone Priority Queues}}.
\newblock {\em SIAM Journal on Computing}, 28(4):1326--1346, January 1999.
\newblock \href {https://doi.org/10.1137/S0097539796313490} {\path{doi:10.1137/S0097539796313490}}.

\bibitem{cormen_introduction_2022}
Thomas~H. Cormen, Charles~Eric Leiserson, Ronald~Linn Rivest, and Clifford Stein.
\newblock {\em Introduction to Algorithms}.
\newblock The MIT Press, Cambridge, Massachusetts London, fourth edition, 2022.

\bibitem{costa_exploring_2025}
Jonas Costa, Lucas Castro, and Rosiane {de Freitas Rodrigues}.
\newblock Exploring monotone priority queues for {{Dijkstra}} optimization.
\newblock {\em RAIRO - Operations Research}, June 2025.
\newblock \href {https://doi.org/10.1051/ro/2025082} {\path{doi:10.1051/ro/2025082}}.

\bibitem{demaine_retroactive_2007}
Erik~D. Demaine, John Iacono, and Stefan Langerman.
\newblock Retroactive data structures.
\newblock {\em ACM Transactions on Algorithms}, 3(2):13, May 2007.
\newblock \href {https://doi.org/10.1145/1240233.1240236} {\path{doi:10.1145/1240233.1240236}}.

\bibitem{dehne_polylogarithmic_2015}
Erik~D. Demaine, Tim Kaler, Quanquan Liu, Aaron Sidford, and Adam Yedidia.
\newblock Polylogarithmic {{Fully Retroactive Priority Queues}} via {{Hierarchical Checkpointing}}.
\newblock In Frank Dehne, J{\"o}rg-R{\"u}diger Sack, and Ulrike Stege, editors, {\em Algorithms and {{Data Structures}}}, volume 9214, pages 263--275. Springer International Publishing, Cham, 2015.
\newblock \href {https://doi.org/10.1007/978-3-319-21840-3_22} {\path{doi:10.1007/978-3-319-21840-3_22}}.

\bibitem{dickerson_cloning_2010}
Matthew~T. Dickerson, David Eppstein, and Michael~T. Goodrich.
\newblock Cloning {{Voronoi Diagrams}} via {{Retroactive Data Structures}}.
\newblock In Mark De~Berg and Ulrich Meyer, editors, {\em Algorithms -- {{ESA}} 2010}, volume 6346, pages 362--373. Springer Berlin Heidelberg, Berlin, Heidelberg, 2010.
\newblock \href {https://doi.org/10.1007/978-3-642-15775-2_31} {\path{doi:10.1007/978-3-642-15775-2_31}}.

\bibitem{mccreight_priority_1985}
Edward~M. McCreight.
\newblock Priority {{Search Trees}}.
\newblock {\em SIAM Journal on Computing}, 14(2):257--276, May 1985.
\newblock \href {https://doi.org/10.1137/0214021} {\path{doi:10.1137/0214021}}.

\bibitem{sunita_dynamizing_2021}
{Sunita} and Deepak Garg.
\newblock Dynamizing {{Dijkstra}}: {{A}} solution to dynamic shortest path problem through retroactive priority queue.
\newblock {\em Journal of King Saud University - Computer and Information Sciences}, 33(3):364--373, March 2021.
\newblock \href {https://doi.org/10.1016/j.jksuci.2018.03.003} {\path{doi:10.1016/j.jksuci.2018.03.003}}.

\bibitem{williams_algorithm_1964}
J.W.J. Williams.
\newblock Algorithm 232: {{Heapsort}}.
\newblock {\em Communications of the ACM}, 7(6):347--348, June 1964.
\newblock \href {https://doi.org/10.1145/512274.3734138} {\path{doi:10.1145/512274.3734138}}.

\end{thebibliography}

\end{document}